\newtheorem{theorem}{Theorem}[section]   
\definecolor{darkgreen}{RGB}{0,100,0} 
\definecolor{lime}{HTML}{A6CE39}
\DeclareRobustCommand{\orcidicon}{%
 \begin{tikzpicture}
 \draw[lime, fill=lime] (0,0) 
 circle [radius=0.16] 
 node[white] {{\fontfamily{qag}\selectfont \tiny ID}}; \draw[white, fill=white] (-0.0625,0.095) 
 circle [radius=0.007]; \end{tikzpicture}
 \hspace{-2mm}}
\xdef\csname orcid\x\endcsname{\noexpand\href{https://orcid.org/\csname orcidauthor\x\endcsname}{\noexpand\orcidicon}}
\begin{document}
%
\title{PEEL: A Poisoning-Exposing Encoding Theoretical Framework for Local Differential Privacy}
%
%
%

\author{Lisha~Shuai\orcidA{}, 
                 Jiuling~Dong\orcidD{}, 
                 Nan~Zhang\orcidG{}, 
                 Shaofeng~Tan\orcidB{}, 
                 Haokun~Zhang\orcidC{}, 
                 Zilong Song\orcidF{},
                 Gaoya Dong\orcidH{}, 
                 and Xiaolong~Yang\orcidE{}, \textit{Member, IEEE}
\thanks{This work was supported by the National Natural Science Foundation of China under Grant 61971033, Beijing Natural Science Foundation General Project under Grant 4232016, and Guangdong Basic and Applied Basic Research Foundation under Grant 2022A1515110565. (\textit{Corresponding author: Xiaolong~Yang}.)}
\thanks{Lisha Shuai, Jiuling Dong, Shaofeng Tan, Haokun Zhang, Zilong Song, Gaoya Dong, and Xiaolong Yang are with the School of Computer and Communication Engineering, University of Science and Technology Beijing, Beijing, 100083, China (e-mail: promethusls@outlook.com; hndongjiuling@163.com; winber@126.com; zhanghaokun223@126.com; d202210376@xs.ustb.edu.cn; gaoyadong@ustb.edu.cn; yangxl@ustb.edu.cn).}
\thanks{Nan Zhang is with the State Grid Information \& Telecommunication Group Co., Ltd., Beijing, 101200, China (e-mail: 837885930@qq.com).}
\thanks{Shaofeng Tan is with the Information Management Center, Beijing Pinggu Hospital, Beijing, 101200, China.}
}

%
%

\markboth{Journal of \LaTeX\ Class Files,~Vol.~14, No.~8, August~2015}%
{Shell \MakeLowercase{\textit{et al.}}: Bare Demo of IEEEtran.cls for IEEE Journals}
%



\maketitle

\begin{abstract}
Local Differential Privacy (LDP) is a widely adopted privacy-protection model in the Internet of Things (IoT) due to its lightweight, decentralized, and scalable nature. However, it is vulnerable to poisoning attacks, and existing defenses either incur prohibitive resource overheads or rely on domain-specific prior knowledge, limiting their practical deployment. To address these limitations, we propose PEEL, a \underline{P}oisoning-\underline{E}xposing \underline{E}ncoding theoretical framework for \underline{L}DP, which departs from resource- or prior-dependent countermeasures and instead leverages the inherent structural consistency of LDP-perturbed data. As a non-intrusive post-processing module, PEEL amplifies stealthy poisoning effects by re-encoding LDP-perturbed data via sparsification, normalization, and low-rank projection, thereby revealing both output and rule poisoning attacks through structural inconsistencies in the reconstructed space. Theoretical analysis proves that PEEL, integrated with LDP, retains unbiasedness and statistical accuracy, while being robust to expose both output and rule poisoning attacks. Moreover, evaluation results show that LDP-integrated PEEL not only outperforms four state-of-the-art defenses in terms of poisoning exposure accuracy but also significantly reduces client-side computational costs, making it highly suitable for large-scale IoT deployments.
\end{abstract}

\begin{IEEEkeywords}
Local differential privacy (LDP), data security, and poisoning attack.
\end{IEEEkeywords}

%
\IEEEpeerreviewmaketitle

\section{Introduction}
%
%
%
%

\IEEEPARstart{L}{ocal} Differential Privacy (LDP) is a rigorous privacy-preserving paradigm in the distributed setting, where each client perturbs its data through a lightweight randomizer, thereby protecting individual privacy without requiring other trusted third party, while still allowing meaningful statistical queries. Given these characteristics, LDP is increasingly deployed on the Internet of Things (IoT) edge devices to enable lightweight privacy-preserving data collection. Major tech firms leverage LDP for tasks such as gathering geolocation data (Microsoft \cite{ding2017collecting}, Xiaomi, Meizu), browsing habits (Google \cite{erlingsson2014rappor}), and emoji usage patterns from user input (Apple  \cite{2017LearningWP}). Beyond consumer applications, LDP is also critical in the IoT, including smart grids \cite{shanmugarasa2023local}, connected healthcare \cite{arachchige2019local}, and industrial control systems \cite{jiang2021differential}, among others. 

However, the randomizers of LDP, while essential for privacy, inadvertently enable data poisoning attacks by making it difficult for the aggregator to distinguish poisoned data from legitimate ones. Adversaries can deliberately inject poisoning to distort the Statistical Query Results (SQRs), compromising the validity of statistical analyses~\cite{wu2022poisoning}. Such attacks can bias mean/frequency estimation~\cite{cheu2022differentially,cao2021data}, skew histogram statistics~\cite{cheu2021manipulation}, degrade graph analytics~\cite{li2022fine}, and disrupt key-value aggregation~\cite{imola2022robustness}. In mission-critical applications, the corrupted SQRs can have far-reaching impacts, jeopardizing household safety, critical infrastructure, and societal stability~\cite{shuai2025poisoncatcher}.

Prior works have developed diverse countermeasures against LDP poisoning attacks, which can be broadly classified by their intervention stage. (i) \textbf{Pre-perturbation defenses} reinforce randomizers to reduce the feasibility of poisoning injection. Prior works included verifiable mechanisms~\cite{10.1007/978-3-030-81242-3_3}, collaborative protocols~\cite{10220122}, and hybrid cryptographic schemes~\cite{10.1007/978-3-031-68208-7_18}. However, these approaches often rely on prior assumptions about poisoning behaviors and introduce substantial overhead in communication and computation; (ii) \textbf{In-process mitigations} incorporate adaptive control strategies into either the client-side randomizer or the server-side aggregator, limiting the propagation and cumulative impact of poisoned inputs throughout the data collection pipeline. Representative techniques include anomaly-aware client filtering~\cite{279934} and dynamic perturbation reallocation~\cite{shuai_rdp_2022}. However, their reliance on continuous monitoring of perturbation outputs and frequent adaptive parameter updates results in substantial overhead, making deployment in high-throughput IoT environments impractical. (iii) \textbf{Post-hoc detections} reveal poisoning by detecting inconsistencies between SQRs and expected patterns. Representative designs include normalization with conditional checks~\cite{272214}, two-round anomaly detection~\cite{10415225}, bilevel optimization frameworks~\cite{10423870}, EM-based statistical defenses~\cite{app14146368}, and four-stage poison identification~\cite{shuai2025poisoncatcher}. These methods apply across different poisoning strategies, but their detection reliability depends on the accuracy of SQRs and the stability of the LDP randomizers.

These measures often suffer from prohibitive resource costs in IoT and other LDP environments, or a reliance on domain-specific prior knowledge, which collectively hinder their practical deployment. Therefore, we present PEEL, a Poisoning-Exposing Encoding theoretical framework for LDP, realized as a non-intrusive post-processing module that operates on LDP-perturbed data, and transcends the conventional classification of LDP defenses. Methodologically, we establish an architectural principle wherein any LDP mechanism capable of generating or being efficiently mapped to a 1-sparse vector can serve as the input layer within PEEL's unified framework. Within PEEL, we apply low-dimensional projection to 1-sparse encoded inputs followed by linear reconstruction. Benign samples maintain stable and consistent support structures throughout projection-reconstruction, enabling faithful recovery of their original sparse patterns. In contrast, poisoning attacks disrupt this sparse geometry, inducing support misalignment and instability that manifest as statistically distinctive residuals. These residual signatures thus establish a robust, attack-agnostic criterion for poisoning exposure. The main contributions are as follows:
\begin{itemize}
    \item We propose PEEL, a lightweight theoretical framework for LDP that exposes poisoning attacks by verifying structural consistency, eliminating the need for domain-specific priors.
    \item We establish rigorous theoretical guarantees for PEEL, demonstrating its ability to ensure unbiased estimation while preserving the original statistical utility bounds. Furthermore, we provide a comprehensive robustness analysis against both output and rule poisoning attacks.
    \item Extensive evaluations demonstrate that PEEL achieves significantly higher poisoning detection accuracy over four state-of-the-art methods, while consistently incurring lower client-side overhead across seven mainstream defense approaches, confirming its practical advantages for large-scale IoT deployments.
\end{itemize}

\section{Related Works}
\label{sec:related_work}

\textbf{LDP in single-attribute data collection.} As the foundational use case of LDP, single-attribute data collection has been extensively studied and widely deployed. Prior research in this setting generally falls into three major categories that focus on frequency estimation and heavy-hitter identification for categorical data, mean estimation for continuous data, and mechanism design that targets structural optimization. 

For frequency estimation, existing works have developed techniques that efficiently reconstruct categorical distributions under privacy constraints. Representative methods span randomized response~\cite{warner1965randomized}, Bloom-filter ~\cite{erlingsson2014rappor}, one-hot encoding with count sketch~\cite{2017LearningWP}, Hadamard transforms~\cite{acharya2019hadamard}, optimized hashing schemes~\cite{wang2017locally, kairouz2016discrete}, prefix-tree-based encoding~\cite{wang2019locally}, and projection-based transforms~\cite{bassily2015local, joseph2018local}. For mean estimation of continuous data, LDP mechanisms have evolved from early noise-injection approaches~\cite{dwork2006calibrating} to more refined, information-theoretically grounded methods. These include geometric encoding~\cite{duchi2013local, duchi2014privacy}, symbolic quantization with low-bit perturbation~\cite{ding2017collecting}, and minimax-optimal schemes based on Gaussian noise or sign compression~\cite{lee2023minimax,nikita2025efficient}. Beyond task-specific designs, studies have investigated the theoretical limits of general-purpose mechanisms under LDP. Core contributions include extremal mechanism ~\cite{kairouz2014extremal}, staircase-structured perturbation~\cite{geng2015staircase}, trusted-party assisted protocols ~\cite{ren2018textsf}, and unified key-value estimation frameworks ~\cite{ye2021privkvm}.

\textbf{LDP in multi-attribute data collection.} Beyond single-attribute collection, a substantial body of work has examined multi-attribute data, where high dimensionality raises unique challenges such as privacy budget allocation, dimensionality curse, and cumulative noise amplification \cite{8911204}. These issues substantially impair statistical accuracy and limit the scalability of LDP in multi-attribute applications.

Prior works have pursued three major technical directions. The first centers on accuracy-enhancing mechanisms that operate directly on the original data domain using optimized perturbation encodings without relying on complex reconstruction. Representative methods include marginal encoding with Hadamard transforms~\cite{cormode2018marginal}, structured decomposition~\cite{yang2020answering}, joint aggregation protocols~\cite{kikuchi2022castell}, and sparse perturbation mechanisms~\cite{xu2023mlpkv, wang2025ukvldp}. The second emphasizes reconstruction through iterative optimization frameworks, extracting latent statistical structures from noisy inputs to recover accurate aggregates. Notable techniques include hierarchical decomposition with interaction queries~\cite{wang2019answering} and sparse signal recovery via iterative hard thresholding~\cite{wang2019sparse}. The third focuses on dimensionality reduction through feature selection to reduce system overhead. Approaches include randomized projection and 1-bit encoding~\cite{nguyen2016collecting} and adaptive partitioning with selective reporting~\cite{wang2019collecting}, while theoretical advances have established minimax-optimal noise allocation strategies~\cite{duchi2018minimax}.

Despite differences in randomizers, both single- and multi-attribute data collection mechanisms commonly generate outputs that are structurally constrained. For example, one-hot and Bloom-filter encodings activate only a few bits in each output, hash- or orthogonal-code mechanisms map inputs to fixed codewords with predetermined supports, and staircase or sign-based quantizers emit low-dimensional signed vectors. These perturbation rules enforce strong regularities in the output space, that is, each output is confined to a limited set of coordinates or templates, leading to highly structured patterns. Such structural uniformity not only streamlines aggregation and stabilizes statistical estimators but also reveals that LDP outputs are far from arbitrary, namely, they occupy a constrained subspace shaped by the mechanism’s design. This observation highlights an often-overlooked property, i.e., the privacy guarantee operates within a rigid encoding structure, which serves as a fundamental lever for both attack design and defense development in LDP systems.

\section{Preliminaries}

\subsection{Local Differential Privacy}

This study adopts the standard $\varepsilon$-LDP definition, applicable to both single- and multi-attribute data collection settings. Let $X_i \in \mathcal{X}$ be the input and $Z_i \in \mathcal{Z}$ be the output corresponding to the LDP on $\mathcal{X}$, where $\mathcal{X}$ denotes the domain of raw data values, $\mathcal{Z}$ denotes the output space of the local mechanism. For any $x, x' \in \mathcal{X}$ and any measurable subset $S \subseteq \mathcal{Z}$, given a privacy budget $\varepsilon > 0$, if a statistical query function $\mathcal{Q}$ satisfies the following inequality, then $Z_i$ is said to be an $\varepsilon$-LDP representation of $X_i$ \cite{duchi2013local}:
\begin{equation}
    \sup_{S \subseteq \mathcal{Z}} \frac{\mathcal{Q}(Z_{i} \in S | X_{i} = x)}{\mathcal{Q}(Z_{i} \in S | X_{i} = x')} \leq e^\varepsilon
    \label{LDP_definition_1}
\end{equation}
where $\mathcal{Q}(Z_i \in S \mid X_i = x)$ denotes the conditional probability that the output $Z_i$ falls within the set $S$, given the input $X_i = x$, denoted as $\mathbb{P}(Z_i \in S \mid X_i = x)$. Each output $Z_i$ depends solely on its corresponding input $X_i$, represented as $X_i \rightarrow Z_i$, and $Z_i$ is independent of all other inputs and outputs given $X_i$, expressed as $Z_i \perp \{X_j, Z_j \mid j \neq i\} \mid X_i$.

\subsection{LDP Poisoning Attacks}

The randomizer in LDP acts as a natural layer of obfuscation, rendering poisoned outputs indistinguishable from legitimate ones and hindering detection. To characterize poisoning threats in the LDP setting, we build upon the attack framework developed in our prior work~\cite{shuai2025poisoncatcher}, which categorizes attacks according to their manipulation targets within the perturbation process. Specifically, three representative classes are distinguished: (i) input poisoning, which manipulates client inputs before perturbation; (ii) output poisoning, which tampers with perturbed outputs after perturbation; and (iii) rule poisoning, which alters the internal parameters of the randomizers. 

In input poisoning attacks, adversaries compromise the quality of raw data collected at the client side, aiming to disrupt the integrity of downstream statistical queries. These attacks commonly follow two primary strategies, where adversaries either impersonate legitimate clients to inject crafted inputs or manipulate the sensing environment to induce corrupted readings. As such attacks require no access to centralized infrastructure or elevated system privileges, they are inherently low-cost, stealthy, and difficult to detect in distributed settings.

In output poisoning attacks, adversaries manipulate perturbed reports after the LDP perturbation step. These attacks commonly follow two strategies: adversaries may either alter perturbed reports before submission or inject forged outputs that circumvent the legitimate reporting channel. Because perturbation and transmission are decoupled, adversaries can decide whether to tamper with individual client outputs or inject bulk forgeries at the aggregator, thus flexibly controlling both the injection point and the attack scale. Formally, the output poisoning attack is modeled as a post-processing map within a plausible set $\mathcal{X}\subseteq \mathrm{Range}(\psi_\varepsilon)$:
\begin{equation}
\Delta_{\mathrm{out}}(x,\psi_\varepsilon)\ \coloneqq\ \Delta\!\big(\psi_\varepsilon(x)\big),
\label{output_poisoning_attacks}
\end{equation}
where $\psi_\varepsilon$ is an $\varepsilon$-LDP randomization mechanism, $\Delta$ denotes the poisoning manipulation, and the probability of the poisoning mechanism outputting a particular value $x$ is proportional to $\exp\!\left(-\frac{\varepsilon\,\lVert x-\psi_\varepsilon(x)\rVert_1}{f}\right)$, with $\psi_\varepsilon$ being the intended LDP mechanism and $f$ the query sensitivity.

In rule poisoning attacks, the adversary rewrites the local randomizer (e.g., encoding logic or privacy parameters), distorting the benign input–output mapping while ensuring that audit-visible accounting remains unchanged. Formally, the intended $\psi_\varepsilon$ is replaced by a modified mechanism $\Delta(\psi_\varepsilon)$, applied to each report data $x$:
\begin{equation}
    \Delta_{\mathrm{rule}}(x,\psi_\varepsilon) \coloneqq \Delta(\psi_\varepsilon)(x),
\text{ s.t.}\ \sum_{i=1}^{n}\varepsilon_i=\varepsilon_{\text{total}}.\label{rule_poisoning_attacks}
\end{equation}

Because the reported privacy budgets are preserved, these small but systematic deviations persist across all data and accumulate in aggregation, making rule poisoning both stealthy to audits and damaging to SQRs.

\subsection{Problem Definition}

While all three classes of poisoning attacks bias SQRs, they differ in execution and impact. Input poisoning occurs during local data acquisition, compromising data integrity at the source. Such attacks are typically mitigated via local verification or outlier filtering, which \textbf{is outside the scope of this work}. Output and rule poisoning attacks manipulate either the perturbed reports or the LDP internal logic, while the outputs maintain apparent structural compliance with benign patterns, thereby evading conventional defenses and undermining estimation fidelity. \textbf{This work primarily focuses on these stealthy poisoning attacks}.

Although they intervene at different points, output and rule poisoning both act on perturbed outputs and are modeled in a unified manner by denoting any poisoned output as $z_i^\Delta$. Formally,
\begin{equation}
    z_i^\Delta = \Delta(z_i), 
\end{equation}
where $\Delta:\mathcal{Z}\to\mathcal{Z}$ preserves the output domain $\mathcal{Z}$ and is not required to satisfy $\varepsilon$-LDP. Hence, poisoned outputs remain elements of $\mathcal{Z}$ and appear compliant while embedding targeted deviations.

Based on properties of representative LDP mechanisms summarized in Section~\ref{sec:related_work}, the benign perturbed output $z_i$ can be expressed as:
\begin{equation}
z_i = \psi_{\varepsilon}(x_i) = \mathcal{C}(x_i) + \mathcal{R}_i^{(\varepsilon)} ,
\label{Structured_coding_random_perturbation}
\end{equation}
where $\mathbb{P} \left(\left\|\mathcal{R}_i^{(\varepsilon)}\right\| \le \delta_\varepsilon\right) \ge 1-\eta_\varepsilon$, $\mathcal{C}(\cdot)$ denotes a structure-preserving encoding of the raw input $x_i$, and $\mathcal{R}_i^{(\varepsilon)}$ is an $\varepsilon$-LDP-compliant randomization term. The norm $\|\cdot\|$ is defined on the output space $\mathcal{Z}$ and is mechanism-specified (e.g., $\ell_2$ or $\ell_\infty$). Constants $\delta_\varepsilon$ and $\eta_\varepsilon$ are mechanism- and $\varepsilon$-dependent.

The perturbed outputs are expected to satisfy the structural consistency bound:
\begin{equation}
\| z_i - \mathcal{C}(x_i) \| \le \delta_\varepsilon ,
\end{equation}
which follows from the calibration of the randomized mechanism. This holds with probability at least $1-\eta_\varepsilon$ by calibration of the randomized mechanism.

Let $\mathcal{T}_{\mathrm{struct}}$ denote an encoding operator that renders violations of the expected support constraints detectable. Define $\mathcal{G}$ as the admissible set in the representation space induced by benign outputs:
\begin{equation}
    \mathcal{G} \!=\! \left\{ \mathcal{T}_{\mathrm{struct}}(z) \,\middle|\, z=\mathcal{C}(x)+\mathcal{R}^{(\varepsilon)},\ \|\mathcal{R}^{(\varepsilon)}\|\le \delta_\varepsilon \right\}.
\end{equation}

Poisoning exposure occurs when the transformed output lies outside $\mathcal{G}$:
\begin{align}
&\mathcal{S}_{\mathrm{poison}} =
\left\{ z_i^\Delta \middle|
\mathcal{T}_{\mathrm{struct}}(z_i^\Delta) \notin \mathcal{G} \right\}.
\label{eq:poisoning-exposure-struct}
\end{align}

The violation of this admissible set constraint serves as a robust indicator of data poisoning.

\section{PEEL: Poisoning-Exposing Encoding Mechanism for LDP}

As a concrete instantiation of the transformation function $\mathcal{T}_{\mathrm{struct}}$, PEEL is a structure-oriented encoding mechanism for LDP that maps perturbed outputs into a representation space where poisoning-induced structural deviations are amplified into explicit forms. The process comprises sparse mapping, normalization, and low-rank projection, jointly preserving the structural patterns characterized by benign data under legitimate $\varepsilon$-LDP perturbations as defined in $\mathcal{G}$. Because the encoding pipeline determines these patterns, any poisoning-induced modification disrupts their alignment with the expected structural support in this space, making such deviations directly observable. 

Fig. \ref{fig:peel-flow} illustrates the data flow of PEEL under LDP, tracing the complete pathway from raw data on the client side through LDP perturbation and PEEL encoding, to PEEL decoding, and final statistical query on the receiver side. Within this framework, the PEEL-encoded vector $y$ serves as a transmission object specifically engineered for efficient communication and robust defense against output and rule poisoning attacks. Conversely, the decoded vector on the receiver side represents the 1-sparse data with standard LDP semantics. All subsequent utility analyses are performed on $s$, as it directly corresponds to the input for statistical query tasks.

\begin{figure}[htbp]
  \centering
  \includegraphics[width=0.95\linewidth]{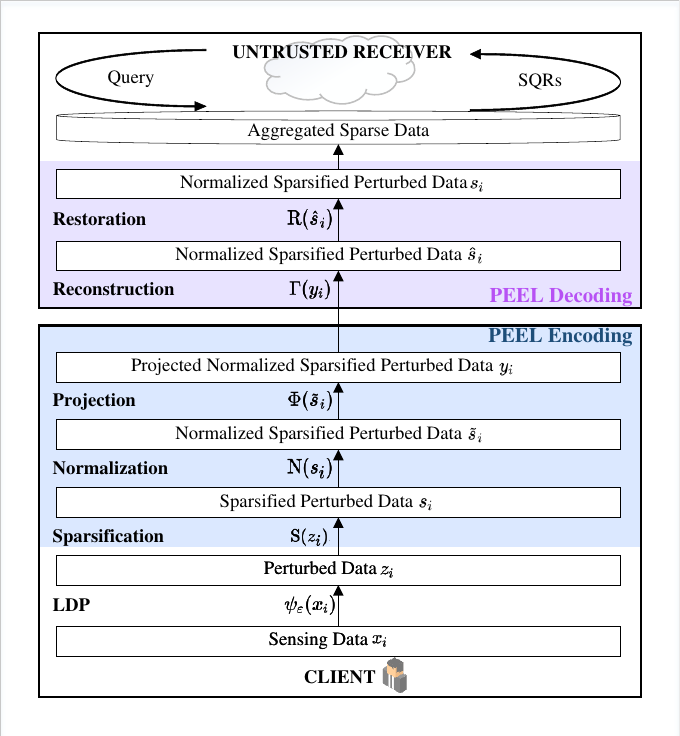}
  \caption{Data Flow of LDP-integrated PEEL}
  \label{fig:peel-flow}
\end{figure}

\vspace{-2ex}

\subsection{Mathematical Model}

The PEEL encoding process transforms LDP-perturbed outputs into a space where poisoning-induced deviations become structurally explicit, through three sequential stages: (i) sparse mapping enforces a $1$-sparsity form to localize deviations in non-sparse outputs or align already 1-sparse outputs to a predefined structural position, ensuring they share the same sparsity pattern for direct comparison; (ii) normalization standardizes the sparse representation onto a mechanism-consistent scale, ensuring comparability across samples and mechanisms; and (iii) low-rank projection maps the standardized sparse representation into a low-dimensional subspace in which legitimate encodings can be exactly reconstructed, whereas any tampering yields a residual that exposes poisoning.

\noindent\textbf{Sparse Mapping.} Let $\mathrm{S}:\mathcal{Z}\to\mathcal{Z}_\mathrm{S}$ denote the sparse mapping function applied to each perturbed output $z_i$. The imposed $1$-sparsity constraint localizes any structural deviation to a single coordinate, concentrating the effect of a poisoning injection rather than dispersing it across dimensions. 

This extreme sparsification is chosen for its twofold advantage. First, it maximizes detection sensitivity by simplifying legitimate patterns, which makes deviations more pronounced and reduces the risk of false positives or missed detections. Second, it ensures computational efficiency, which is crucial for resource-constrained IoT edge devices. Formally,
\begin{equation}
    s_i = \mathrm{S}(z_i), \quad s_i \in \mathbb{R}^k,\quad \|s_i\|_0 \le 1,
    \label{Sparse_Mapping}
\end{equation}
where $k$ denotes the data dimension, and the non-zero entry retains the sign of the corresponding component in $z_i$, preserving a deterministic structure.

i) \textbf{Naturally 1-sparse mechanisms.} For LDP randomizers whose client side data are already 1-sparse with symmetric signs and known selection probabilities (e.g., RR \cite{warner1965randomized} / kRR \cite{pmlr-v48-kairouz16} / Direct Encoding / LH / OLH \cite{wang2017locally}, Hadamard Response family \cite{cormode2018marginal}\cite{yang2020answering}, and Harmony \cite{nguyen2016collecting} / Duchi-style 1-bit \cite{duchi2018minimax} / Piecewise Mechanisms \cite{wang2019collecting}, as well as the k-Subset case with $m=1$) \cite{geng2015staircase}, as a structurally consistent class characterized by single index and symmetric sign and known selection probabilities. Client reports in this class are intrinsically 1-sparse (or become 1-sparse under a fixed linear recoding), and therefore require no additional sparsification. In this case, $s_i=z_i$ in \eqref{Sparse_Mapping}.

ii) \textbf{Non–1-sparse mechanisms.} Unary Encoding (UE) / Optimized UE (OUE)\cite{wang2017locally}, RAPPOR\cite{erlingsson2014rappor}, $k$-Subset with $m{>}1$ / Subset Selection\cite{geng2015staircase}, additive-noise numeric mechanisms\cite{dwork2006calibrating}, and spherical-direction reports all produce multi-dimensional or dense outputs\cite{bhowmick2018protection}. For these LDP mechanisms, we apply a sparsification map $S$ and enforce the conditional–expectation alignment, thereby casting heterogeneous data into a unified 1-sparse normal form without altering unbiasedness.

Let $t(z_i)\in\mathbb{R}^k$ denote the per-coordinate unbiased transform used by the LDP statistical estimator. We require the sparse mapping $s_i=\mathrm{S}(z_i)$ to satisfy the conditional expectation alignment, i.e.,
\begin{equation}
\mathbb{E}\big[s_i \,\big|\, z_i\big] = t(z_i).
\label{eq:HT-alignment}
\end{equation}

The above condition is sufficient to ensure that, for any linear or dimension-wise aggregation query $Q$, the expected value under PEEL matches that of the standard LDP pipeline, introducing no additional bias:
\begin{equation}
\mathbb{E}\big[Q(s_{1:n})\big]
=
\mathbb{E}\big[Q\big(t(z_{1:n})\big)\big].
\end{equation}
where the subscript $1\!:\!n$ denotes the sequence of data points from the first to the $n$-th sample.

1-sparsification method that satisfies \eqref{eq:HT-alignment} is an unequal-probability sampling construction with inverse-probability weighting. Choose selection probabilities $p_j(z_i)\in(0,1]$ such that $\sum_{j=1}^k p_j(z_i)=1$ and $p_j(z_i)>0$ whenever $t_j(z_i)\neq 0$. Draw a single index $J\sim p(z_i)$ and define the 1-sparse output:
\begin{equation}
s_{i,J}=\frac{t_J(z_i)}{p_J(z_i)},
\label{eq:ht-construction}
\end{equation}
where $s_{i,j}=0$ and $(j\neq J)$. Then for any coordinate $j$,
\begin{equation}
\!\!\!\mathbb{E}\big[s_{i,j}\,\big|\,z_i\big]
\!=\!\frac{t_j(z_i)}{p_j(z_i)}\,\mathbb{P} \big(J\!=\!j\,\!\big |z_i\big)
\!=\!\frac{t_j(z_i)}{p_j(z_i)}\,p_j(z_i)
\!=\! t_j(z_i),\!\!\!
\end{equation}
so \eqref{eq:HT-alignment} holds by construction.

All operations in $\mathrm{S}$ depend only on $z_i$ and on auxiliary randomness that is independent of the raw data $x_i$, and therefore constitute post-processing that does not degrade the original $\varepsilon$-LDP privacy guarantee.

\noindent\textbf{Normalization.} Let $\mathrm{N}:\mathcal{Z}_\mathrm{S} \to \mathcal{Z}_\mathrm{N}$ denote the normalization operator applied to the sparse structural vector $s_i$.  This step mitigates inconsistencies in numerical scales arising from different LDP mechanisms or feature magnitudes by applying z-score normalization to $s_i$, which preserves the sign of each non-zero entry while rescaling its magnitude to a common scale for cross-sample comparability. Formally,
\begin{equation}
    \tilde{s}_i = \mathrm{N}(s_i).
\end{equation}

The normalization is computed solely from that data's own coordinates—using its within-vector mean and standard deviation—without referencing other data or the raw sensitive input. Consequently, it is a post-processing step on the LDP output, aligned with the front-end LDP workflow and incurring no additional privacy cost.

Let the original 1-sparse vector be $s=\pm e_J$, where $e_J$ denotes the $J$-th standard basis vector. The standardized vector $\tilde{s}$ then satisfies:
\begin{equation}
\tilde{s}_J=\pm\sqrt{k-1},\;
\tilde{s}_j=\mp\frac{1}{\sqrt{k-1}}\quad (j\neq J).
\label{z-score_data}
\end{equation}

\noindent\textbf{Low-Rank Projection.} Let $\mathrm{P}:\mathcal{Z}_\mathrm{N}\!\to\!\mathbb{R}^{k-1}$ project the normalized one-sparse code $\tilde{s}_i$ using a data-independent Gaussian map $\Phi\in\mathbb{R}^{(k-1)\times k}$. Writing $\Theta=\Phi W$ with $W\in\mathbb{R}^{k\times (k-1)}$ an orthonormal structural basis, $\Theta$ is square and invertible with probability $1$, and is a subspace near-isometry on $\mathrm{col}(W)$. For any benign $\tilde{s}\in\mathrm{col}(W)$,
\begin{equation}
    (1-\varepsilon)\,\|\tilde{s}\|_2 \le \|\Phi\tilde{s}\|_2 \le (1+\varepsilon)\,\|\tilde{s}\|_2,
\end{equation}
with high probability. Consequently, relative geometry among benign encodings is preserved in $\mathbb{R}^{k-1}$, while poisoning-induced deviations become more salient in the compact projected space.

Each position in $s_i$ can represent two sign-symmetric states (positive or negative), resulting in $2k$ admissible $1$-sparse normalized encodings in total. These admissible encodings constitute the columns of the canonical structural matrix $\mathcal{D} \in \mathbb{R}^{k \times 2k}$, which is mean-centered and symmetric, introducing linear dependencies between columns and limiting its rank to at most $k-1$.

The minimal subspace that spans all legitimate encodings is obtained by solving the following constrained optimization problem:
\begin{equation}
    \min_{W, A} \left\| \mathcal{D} - WA \right\|_F^2, 
    \quad \text{s.t. } W^\top W = I,
\end{equation}
where $W \in \mathbb{R}^{k \times (k{-}1)}$ is a column-orthonormal basis and $A \in \mathbb{R}^{(k{-}1) \times 2k}$ are projection coefficients. This corresponds to finding the optimal low-rank representation of $\mathcal{D}$ in the least-squares sense. 

Given $W$ from the decomposition, the low-rank projection of $\tilde{s}_i$ is computed as:
\begin{equation}
    \alpha_i = W^\top \tilde{s}_i, \label{s_i_low_rank_projection},
\end{equation}
and the approximate reconstruction within this subspace is:
\begin{equation}
    \hat{s}_i \approx W \alpha_i + e_i,
\end{equation}
where $e_i$ denotes the reconstruction residual. 

\noindent\textbf{Structural Encoding.} Let $\mathcal{T}_{\mathrm{encode}}: \mathcal{Z} \to \mathbb{R}^{k-1}$ represent PEEL's structural encoding, comprising sparse mapping, normalization, and low-rank projection in sequence. Formally,
\begin{equation}
    y_i = \mathcal{T}_{\mathrm{encode}}(z_i) 
        = \Phi \cdot \mathrm{N}(\mathrm{S}(z_i)),
    \label{Structural_consistency_expression}
\end{equation}
where $y_i$ serves as a unified structural encoding for all perturbed samples, enabling the aggregation server to evaluate structural consistency across inputs.

\noindent\textbf{Reconstruct and Consistency Exposure.} Let $\mathcal{T}_{\text{decode}}: \mathbb{R}^{k-1} \to \mathbb{R}^k$ denote PEEL's structural decoding, comprising inverse low-rank projection.  Together with the encoding operator $\mathcal{T}_{\text{encode}}$, they form the complete structural transformation $\mathcal{T}_{\text{struct}}$ of PEEL. To verify consistency, the aggregator applies a linear consistency reconstruction operator $\Gamma:\mathbb{R}^{k-1} \to \mathbb{R}^k$ to recover an estimate $\hat{s}_i$ of the normalized 1-sparse representation, i.e.,
\begin{equation}
    \hat{s}_i = \Gamma y_i.
    \label{Projective_Reconstruction}
\end{equation}

Under structural consistency, the reconstruction is exact because the operator $\Gamma$ serves as the left-inverse of the projection matrix $\Phi$ on the subspace of legitimate encodings. Specifically, with \eqref{Structural_consistency_expression} and $\Gamma \in \mathbb{R}^{k \times (k{-}1)}$ chosen as the Moore--Penrose pseudoinverse (or equivalently $\Phi^\top$ under RIP), \eqref{Projective_Reconstruction} satisfies:
\begin{equation}
    \hat{s}_i = \Gamma y_i = \Gamma \Phi \tilde{s}_i = \tilde{s}_i. \label{reconstruction_process}
\end{equation}

For valid inputs, the mapping $\Phi$ and reconstruction operator $\Gamma$ are lossless over the subspace spanned by the $2k$ admissible $1$-sparse normalized encodings in $\mathcal{D}$.  Thus, $\hat{s}_i$ exactly matches one of these discrete patterns, each characterized by a single dominant coordinate and $(k-1)$ suppressed coordinates. In contrast, poisoning injections alter either the dominant coordinate’s position or its relative magnitude, moving the perturbed representation outside this admissible subspace  (where $\mathcal{G} = \mathcal{D}$ in \eqref{eq:poisoning-exposure-struct}). The resulting reconstruction $\hat{s}_i$ exhibits a non-zero residual and deviates from all legitimate patterns, yielding values inconsistent with benign encodings and thereby exposing the poisoning.

\noindent\textbf{Structural Restore.} This step restores the canonical 1-sparse representation in the reconstruction space, thereby enabling statistical queries on the receiver side to operate on a unified representation.

For $\hat{s}_i$, define the deterministic restore operator $\mathrm R$:
\begin{equation}
\mathrm R(\hat{s}_i) \coloneqq \operatorname{sgn}\!\big(\hat{s}_{i,J}\big)\, e_J,
\text{ where } 
J \coloneqq \operatorname*{arg\,max}_{j\in\{1,\dots,k\}} \lvert \hat{s}_{i,j}\rvert .
\label{restore}
\end{equation}

Under the single-data z-score, the unique maximum-amplitude coordinate identifies the support and its sign. In the closed-loop setting $\hat{s}_i=\tilde{s}_i$, this yields:
\begin{equation}
   \mathrm R(\hat{s}_i)=s_i.
\end{equation}

For LDP randomizers that are inherently 1-sparse with symmetric signs and known selection probabilities, the above restoration is lossless. In contrast, LDP mechanisms with multi-dimensional or dense reports are first mapped to a 1-sparse surrogate via sampling with inverse-probability weighting, ensuring the alignment \eqref{eq:HT-alignment}. Consequently, any linear or dimension-wise statistical query attains the same expectation as in the standard LDP pipeline, preserving unbiasedness while providing a unified representation for downstream analysis.

\noindent\textbf{Closed-Loop PEEL Process.} The encode–decode pathway of PEEL consists of sparse mapping $\mathrm{S}$, normalization $\mathrm{N}$, low-rank projection $\Phi$, linear reconstruction $\Gamma$, and restore operator $\mathrm R$.  This closed-loop process preserves structurally valid inputs exactly after reconstruction, enabling consistency verification in the reconstructed space. Formally, the transformation pathway for a perturbed sample $z_i$ is:
\begin{equation}
    z_i \xrightarrow{\mathrm{S}} s_i 
        \xrightarrow{\mathrm{N}} \tilde{s}_i 
        \xrightarrow{\Phi} y_i 
        \xrightarrow{\Gamma} \hat{s}_i  = \tilde{s}_i
        \xrightarrow{\mathrm{R}} s_i ,
    \label{eq:closed_loop}
\end{equation}

This closed-loop property ensures that benign inputs are reconstructed without distortion, whereas poisoning-induced deviations result in observable reconstruction residuals, forming the basis for structure-oriented poisoning exposure.

\subsection{Poisoning-Exposing Principle}

Whereas prior defenses rely on distributional similarity or strong attack-model assumptions to separate benign from poisoned data, PEEL shifts the detection paradigm by leveraging structural consistency. Through the projection–reconstruction process, even small inconsistencies in encoding structure translate into large reconstruction residuals~\cite{mitra2012analysis, studer2011recovery}. The detection-mode shift and its amplification effect make poisoned samples noticeable beyond the reach of conventional statistical defenses.

\subsubsection{Structural Reversibility \label{Theoretical_guarantee_of_reversibility}}

Let $\tilde{\mathcal{D}} \in \mathbb{R}^{k \times 2k}$ denote the standardized structural matrix, where each column corresponds to one admissible $1$-sparse normalized encoding. Since there are $2k$ such encodings (two sign-symmetric states per axis), $\tilde{\mathcal{D}}$ collects them into a single canonical representation. PEEL performs principal direction decomposition to extract a column-orthogonal basis $W \in \mathbb{R}^{k \times (k{-}1)}$, whose column space $\mathrm{col}(W)$ defines the structural subspace. A composite projection matrix is then defined as:
\begin{equation}
    \Theta = \Phi W,\label{composite_projection_matrix}
\end{equation}
Where $\Theta \in \mathbb{R}^{(k{-}1) \times (k{-}1)}$. Given a normalized sparse structural vector $\tilde{s}_i \in \mathrm{col}(W)$ derived from a perturbed sample $z_i$, its projected representation can be expressed as:
\begin{equation}
    y_i = \Phi \tilde{s}_i = \Theta \alpha_i,
\end{equation}
where $\alpha_i$ denotes the projection coefficients in the structural basis, see \eqref{s_i_low_rank_projection}.

When $\Theta$ is a full-rank square matrix, its Moore-Penrose pseudoinverse $\Theta^\dagger$ coincides with its true inverse $\Theta^{-1}$. The inverse transformation is thus defined by:
\begin{equation}
    \Gamma = W \Theta^{-1} = W (\Phi W)^{-1},\label{inverse_transformation}
\end{equation}
where $\Gamma \in \mathbb{R}^{k \times (k{-}1)}$. This yields an exact reconstruction path,\eqref{reconstruction_process} satisfies:
\begin{equation}
    \hat{s}_i = \Gamma y_i = W \Theta^{-1} y_i = \tilde{s}_i. \label{eq:perfect_reconstruction}
\end{equation}

Although the proposed linear transformation path in PEEL resembles sparse coding and projection steps commonly used in compressive sensing, the underlying mechanism fundamentally differs from that paradigm: it is deterministic, full-rank, and analytically invertible. Specifically, the normalized structural vectors $\tilde{s}_i$ are explicitly constrained to lie within the subspace $\mathrm{col}(W)$, and the composite projection matrix $\Theta$ is full-rank by construction. As a result, the mapping is not an underdetermined recovery problem, but a deterministic and analytically invertible linear transformation. Consequently, the reconstruction path $\hat{s}_i = \tilde{s}_i$ achieves exact recovery without approximation or sparsity-driven inference.

The exact reconstruction relation in \eqref{eq:perfect_reconstruction} ensures lossless recovery of legitimate encodings under the following conditions:
\begin{itemize}
    \item[(C1)] \textbf{Single-active structural component:} $\tilde{s}_i$ is 1-sparse.
    \item[(C2)] \textbf{Subspace alignment:} $\tilde{s}_i \in \mathrm{col}(W)$.
    \item[(C3)] \textbf{Full-rank projection:} $\Theta$ is a nonsingular square matrix.
\end{itemize}

Since $\tilde{\mathcal{D}}$ is composed of symmetric 1-sparse normalized encodings, all its columns reside in $\mathrm{col}(W)$. Moreover, since $\Phi$ is drawn from a sub-Gaussian distribution, $\Theta$ (see \eqref{composite_projection_matrix}) is full-rank with probability $1$, ensuring that the inverse transformation $\Gamma$ (see \eqref{inverse_transformation}) exists and is closed-form. Therefore, PEEL achieves perfect reconstruction for any legitimate sample that satisfies conditions (C1)–(C3).

If $\Theta$ is approximately full-rank or $\tilde{s}_i$ slightly deviates from $\mathrm{col}(W)$, reconstruction becomes approximate but remains stable. The use of the Moore-Penrose pseudoinverse yields a least-squares optimal solution:
\begin{equation}
    \hat{s}_i = W \Theta^\dagger y_i = \tilde{s}_i + e_i, 
\end{equation}
where $e_i$ is the reconstruction deviation, and $\|e_i\|_2 \to 0$.

To facilitate theoretical guarantees on reconstruction fidelity and poisoning exposure, we introduce the following structural assumptions:
\begin{itemize}
    \item[(A1)] \textbf{Sparsity of perturbation structure:} Each structural code $s_i$ is 1-sparse. This enables unambiguous encoding and low-rank basis construction.    
    \item[(A2)] \textbf{Stable subspace decomposition:} The normalized structural matrix $\tilde{\mathcal{D}}$ spans a $(k{-}1)$-dimensional subspace with linearly independent columns, ensuring a well-defined projection space.
    \item[(A3)] \textbf{Spectral stability of projection:} The projection matrix $\Phi$ has full row rank, and its singular values are bounded away from zero, ensuring numerical stability and invertibility of the encoding path.
\end{itemize}

These properties define an idealized setting in which PEEL guarantees exact reconstruction of benign encodings, thereby establishing a reversible encoding map. This theoretical foundation enables precise assessment of structural consistency, which in turn facilitates the exposure of poisoning manipulations. In practice, however, practical LDP mechanisms often deviate from (A1)–(A3). Importantly, PEEL does not require strict satisfaction of these properties. Minor deviations only affect the reconstruction path of benign outputs within small tolerances, whereas poisoned outputs induce disproportionally larger residuals due to their structural misalignment. Thus, poisoning exposure remains effective even when the properties are relaxed.

\subsubsection{Exposure Mechanism}

Under conditions (C1)–(C3) established, PEEL enables lossless reconstruction for legitimate samples through a stable encoding–projection–reconstruction chain in a $(k{-}1)$-dimensional subspace. For poisoned samples, the same reconstruction path amplifies structural inconsistencies, yielding non-zero residuals that expose their deviation and render them geometrically distinguishable.

For benign perturbations, the normalized structural encoding $\tilde{s}_i$ satisfies conditions (C1)–(C3). As a result, PEEL ensures exact recovery of the encoding through $\hat{s}_i = \tilde{s}_i$, yielding zero reconstruction error $e_i = 0$ and establishing a closed-form benchmark for structural consistency.

However, in \textbf{output poisoning attacks}, although the poisoned encoding $s_i^\Delta$ retains the 1-sparsity property, its nonzero entry may not correspond to any legitimate column in $\tilde{\mathcal{D}}$. Consequently, the normalized encoding $\tilde{s}_i^\Delta$ falls outside $\mathrm{col}(W)$, violating the subspace alignment condition (C2). The resulting representation cannot be losslessly inverted. The aggregator reconstructs:
\begin{equation}
    \hat{s}_i^\Delta = \Gamma y_i^\Delta = \Gamma \Phi \tilde{s}_i^\Delta,
\end{equation}
and the reconstruction residual is defined as:
\begin{equation}
    e_i^\Delta = \| \hat{s}_i^\Delta - \tilde{s}_i^\Delta \|_2.
\end{equation}

Since $\tilde{s}_i^\Delta \notin \mathrm{col}(W)$, it follows that $\hat{s}_i^\Delta \in \mathrm{col}(W)$ but $\hat{s}_i^\Delta \neq \tilde{s}_i^\Delta$, ensuring $e_i^\Delta > 0$. This nonzero residual establishes the formal criterion by which PEEL exposes structural inconsistency induced by poisoning.

Similarly, in \textbf{rule poisoning attacks}, the perturbed structure $s_i^\Delta$ may satisfy the 1-sparsity constraint, yet its nonzero component is often manually injected by the adversary and does not align with any valid structural vector in the reference matrix $\tilde{\mathcal{D}}$. As a result, the normalized encoding $\tilde{s}_i^\Delta$ deviates from the subspace $\mathrm{col}(W)$ spanned by legitimate structural directions. This misalignment breaks the integrity of PEEL’s structural transformation pipeline. 

Specifically, the principal coordinate $\alpha_i^\Delta = W^\top \tilde{s}_i^\Delta$ derived from a poisoned sample lacks semantic validity, as $\tilde{s}_i^\Delta$ no longer aligns with any column in the structural basis $\tilde{\mathcal{D}}$. Its projected representation $y_i^\Delta = \Phi \tilde{s}_i^\Delta$ therefore falls outside the manifold formed by legitimate encodings. Consequently, the inverse mapping $\hat{s}_i^\Delta = \Gamma y_i^\Delta$ fails to recover $\tilde{s}_i^\Delta$, producing residuals that cannot be reconciled within the structure-consistent space. These unrecoverable deviations are amplified through the closed-loop reconstruction path, and the resulting residuals serve as stable, geometrically separable indicators of structural inconsistency, explicitly exposing rule poisoning behaviors in the encoded output space.

To generalize the analysis across poisoning types, we define an orthogonal decomposition for any suspicious encoding:
\begin{equation}    \tilde{s}_i^\Delta = W \alpha_i^\Delta + e_i^\Delta, \label{orthogonal_decomposition}
\end{equation}
where $e_i^\Delta \perp \mathrm{col}(W)$, $\alpha_i^\Delta$ represents the coordinates of $\tilde{s}_i^\Delta$ in the legitimate subspace $\mathrm{col}(W)$, while the residual $e_i^\Delta$ quantifies the deviation from this subspace, satisfying $e_i^\Delta = 0$ for benign encodings and $e_i^\Delta \neq 0$ for poisoned ones.

Based on this decomposition, the projected representation is given by:
\begin{equation}
    y_i^\Delta = \Phi \tilde{s}_i^\Delta = \Phi W \alpha_i^\Delta + \Phi e_i^\Delta.
\end{equation}
where $\Phi e_i^\Delta$ captures the projection of structural deviation and serves as the geometric signal of poisoning. 

Since $\Phi$ is linear, its effect on residual amplification is bounded as:
\begin{equation}
    c_1 \|e_i^\Delta\|^2 \le \|\Phi e_i^\Delta\|^2 \le c_2 \|e_i^\Delta\|^2,
\end{equation}
where $c_1, c_2 > 0$ are constants determined by the extremal singular values of $\Phi$. If $\Phi$ satisfies the RIP condition, this deviation is preserved in the projection, enabling stable separation of inconsistent samples.

In summary, PEEL defines structural consistency through lossless reconstruction and exposes poisoning behaviors as deviations from this constraint. This model-agnostic mechanism provides a robust and geometry-aware foundation for poisoning exposure under LDP.

\section{Theoretical Guarantees of PEEL}

This section establishes the theoretical foundations of PEEL. On the client side, all PEEL encoding components are post-processing transformations applied only to the $\varepsilon$-LDP reports $z_i$, and they never re-access the raw sensitive data $x_i$. Thus, the transmission object $y_i$ inherits the same $\varepsilon$-LDP guarantee as $z_i$ by the post-processing property of differential privacy~\cite{dwork2006calibrating}, and we therefore do not elaborate further. With respect to statistical utility, we study PEEL under standard LDP mechanisms and prove preservation of the baseline LDP estimator properties. We establish two results: (i) Unbiasedness—estimators based on PEEL-decoded data have the same expectation as those built directly from the original LDP reports; and (ii) Statistical Accuracy—for native statistical queries, PEEL does not worsen established error bounds.

\subsection{Unbiasedness}
\label{Unbiased_analysis}

Let the front-end $\varepsilon$-LDP mechanism $\psi_\varepsilon$ produce privatized reports $z_{1:n}$. Define the baseline estimator:
\begin{equation}
    \widehat{\theta}_{\mathrm{LDP}} \coloneqq Q\big(t(z_1),\ldots,t(z_n)\big).\label{baseline_estimator}
\end{equation}

We consider aggregation operators $Q$ that are linear or dimension-wise (e.g., dimension-wise counts, frequencies, or means). For such $Q$, the baseline estimator is unbiased, i.e., $\mathbb{E}[\widehat{\theta}_{\mathrm{LDP}}]=\theta$.

The sparsification map $\mathrm{S}$ is chosen to satisfy the conditional-alignment property \eqref{eq:HT-alignment}, followed by per-data, across-dimensions z-score normalization \eqref{z-score_data} and a low-rank projection \eqref{Structural_consistency_expression}. On the receiver side, a linear decoder matched to the encoding map implements both projection inversion and z-score restoration—i.e., it performs closed-loop reconstruction \eqref{eq:closed_loop} and then restores the canonical 1-sparse form—so that $y_i$ is mapped directly to $s_i$. This pipeline allows us to establish the following unbiasedness guarantee.

\begin{theorem}[Unbiasedness Preservation]
If $Q$ is linear or dimension-wise, the PEEL-integrated LDP estimator
\begin{equation}
    \widehat{\theta}_{\mathrm{PEEL}}(\hat{s}_{1:n}) \coloneqq Q(\hat{s}_1,\ldots,\hat{s}_n) \label{PEEL-integrated_LDP_estimator}
\end{equation}
satisfies:
\begin{equation}
\mathbb{E}\!\left[\widehat{\theta}_{\mathrm{PEEL}}(\hat{s}_{1:n})\right]
=
\mathbb{E}\!\left[\widehat{\theta}_{\mathrm{LDP}}(z_{1:n})\right]
=\theta. \label{Unbiasedness_Preservation}
\end{equation}
\end{theorem}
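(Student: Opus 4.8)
The plan is to collapse the PEEL estimator onto the baseline and then transfer unbiasedness through the conditional-alignment identity \eqref{eq:HT-alignment}. First I would invoke the structural reversibility of Section~\ref{Theoretical_guarantee_of_reversibility}: under (C1)--(C3) the composite map $\Gamma\Phi$ acts as the identity on $\mathrm{col}(W)$, so $\hat{s}_i=\Gamma\Phi\tilde{s}_i=\tilde{s}_i$, and the deterministic restore operator \eqref{restore} returns the canonical $1$-sparse form, so that the object actually fed to $Q$ in \eqref{PEEL-integrated_LDP_estimator} is exactly $s_i$. The point to emphasize is that normalization, projection, and reconstruction are deterministic bijections on the admissible set, hence undone exactly by the matched decoder along the closed loop \eqref{eq:closed_loop}; they inject no new randomness and therefore no bias. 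This reduces the claim to showing $\mathbb{E}[Q(s_{1:n})]=\theta$.

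Next I would condition on the reports $z_{1:n}$ and use the law of total expectation. Because every operation in the sparse map depends only on $z_i$ together with auxiliary randomness that is independent across clients and independent of the raw data, we have $\mathbb{E}[s_i\mid z_{1:n}]=\mathbb{E}[s_i\mid z_i]=t(z_i)$, the last equality being the alignment property \eqref{eq:HT-alignment}. Since $Q$ is linear or dimension-wise, it commutes with conditional expectation coordinate by coordinate, so $\mathbb{E}[Q(s_{1:n})\mid z_{1:n}]=Q(t(z_1),\ldots,t(z_n))=\widehat{\theta}_{\mathrm{LDP}}(z_{1:n})$. Taking the outer expectation over $z_{1:n}$ and invoking the baseline unbiasedness $\mathbb{E}[\widehat{\theta}_{\mathrm{LDP}}]=\theta$ recorded after \eqref{baseline_estimator} then produces the full chain \eqref{Unbiasedness_Preservation}.

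The step I expect to be the main obstacle is the interchange of $Q$ with the conditional expectation, which is exactly where the ``linear or dimension-wise'' hypothesis is indispensable: for $Q(v_{1:n})=\sum_i M v_i$ or a per-coordinate count/frequency/mean, each $\mathbb{E}[s_i\mid z_i]$ may be substituted independently, whereas any nonlinear cross-client coupling would leave covariance terms that need not vanish. I would therefore make the mutual independence of the sparsification draws an explicit modeling assumption, and confirm the alignment \eqref{eq:HT-alignment} separately for the two mechanism families: for naturally $1$-sparse randomizers the sparse map is the identity (or a fixed linear recoding), so alignment holds trivially, while for dense mechanisms the inverse-probability construction \eqref{eq:ht-construction} gives $\mathbb{E}[s_{i,j}\mid z_i]=t_j(z_i)$ by design. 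With independence in hand, the tower-property argument is routine and the remaining equalities are immediate.
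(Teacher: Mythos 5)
Your proposal is correct and follows essentially the same route as the paper's proof: collapse $\hat{s}_i$ to $s_i$ via the closed-loop reconstruction and restore operator, then transfer unbiasedness through the alignment condition \eqref{eq:HT-alignment} and the linearity/dimension-wise structure of $Q$. Your version is marginally more careful in conditioning on the full vector $z_{1:n}$ and flagging the cross-client independence of the sparsification randomness, which the paper leaves implicit, but this is a refinement of the same argument rather than a different one.
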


\begin{proof}
By the closed-loop reconstruction \eqref{eq:perfect_reconstruction} and the deterministic restore operator \eqref{restore}, we have
\begin{equation}    \mathbb{E}\big[Q(\hat{s}_{1:n})\big]=\mathbb{E}\big[Q(s_{1:n})\big].
\end{equation}

For each $i$, the alignment condition \eqref{eq:HT-alignment} together with the law of iterated expectations gives:
\begin{equation}
    \mathbb{E}[s_i]=\mathbb{E}\!\big[\mathbb{E}[s_i\mid z_i]\big]=\mathbb{E}[t(z_i)] .
\end{equation}

Since $Q$ is linear or dimension-wise, expectation commutes with aggregation, hence:
\begin{equation}
    \mathbb{E}\big[Q(s_{1:n})\big]
=\mathbb{E}\big[Q\big(t(z_{1:n})\big)\big]
=\mathbb{E}\big[\widehat{\theta}_{\mathrm{LDP}}\big]
=\theta.
\end{equation}

For LDP mechanisms whose client reports are inherently 1-sparse with symmetric signs and known selection probabilities, the alignment condition \eqref{eq:HT-alignment} is satisfied without further transformation. Consequently, unbiasedness is inherited directly. For mechanisms with multi-dimensional or dense outputs, the Horvitz--Thompson sparsification in \eqref{eq:ht-construction} preserves the per-data expectation, and thus yields the same unbiasedness as the baseline for the above classes of $Q$.
\end{proof}

\subsection{Statistical Accuracy}
\label{sec:statistical_accuracy}

This section proves that integrating PEEL does not degrade statistical accuracy for linear or dimension-wise aggregation queries. Let the sample-level contribution of the aggregator $Q$ be expressible as:
\begin{equation}
    Q(s_{1:n}) = \frac{1}{n}\sum_{i=1}^n q(s_i),
\end{equation}
where $q(\cdot)$ is linear or dimension-wise additive. 

For linear/dimension-wise $Q$, closed-loop reconstruction and deterministic restoration \eqref{eq:closed_loop} have:
\begin{equation}
    Q(\hat{s}_{1:n}) \,=\, Q(s_{1:n}).
\end{equation}

Hence, any potential accuracy difference can only arise from the sampling randomness in the sparsification map $\mathrm{S}$. For i.i.d. client data, the law of total variance yields:
\begin{equation}
\!\mathrm{Var}\big(Q(s_{1:n})\big)\!=\!\mathrm{Var}\big(Q(t(z_{1:n}))\big)
\!+\!\frac{1}{n}\,\mathbb{E}\big[\mathrm{Var}\big(q(s_i)\!\mid\! z_i\big)\big],\!
\label{eq:var-decomp-keep}
\end{equation}
where the first equality uses the alignment \eqref{eq:HT-alignment} and the linear/dimension-wise structure of $Q$ to pass conditional expectations through samples/dimensions. 

Since unbiasedness has been established in \eqref{Unbiasedness_Preservation}, MSE equals variance, and thus:
\begin{equation}
\mathrm{MSE}\big(\widehat{\theta}_{\mathrm{PEEL}}\big)
= \mathrm{MSE}\big(\widehat{\theta}_{\mathrm{LDP}}\big) + \Delta_n,
\label{eq:MSE-decomp-keep}
\end{equation}
where $ \Delta_n \triangleq \frac{1}{n}\,\mathbb{E}\big[\mathrm{Var}\big(q(s_i)\mid z_i\big)\big]\ge 0$.

\begin{theorem}[Accuracy Preservation]If client reports are naturally 1-sparse with symmetric signs and known selection probabilities, then
\begin{equation}
    \Delta_n = 0, \mathrm{MSE}\big(\widehat{\theta}_{\mathrm{PEEL}}\big)=\mathrm{MSE}\big(\widehat{\theta}_{\mathrm{LDP}}\big).
\end{equation}
If reports are multi-dimensional/dense and the sparsification uses the Horvitz--Thompson construction \eqref{eq:ht-construction}, and if the sample-level contribution for linear/dimension-wise aggregation is $q(s_i)=\sum_j w_j s_{i,j}$, then
\begin{equation}
    \Delta_n
= \frac{1}{n}\,\mathbb{E}\left[\sum_{j} w_j^2\,t_j(z_i)^2\left(\frac{1}{p_j(z_i)}-1\right)\right].
\end{equation}

Under the constraint \(\sum_j p_j(z_i)=1\), choosing
\begin{equation}
    p_j^\star(z_i)\ \propto\ |w_j\,t_j(z_i)|,
\end{equation}
minimizes the additive term above, thereby retaining the baseline \(O(1/n)\) error rate and the same \((\varepsilon,k)\)-dependence; the additive contribution is constant-order and can be optimized via the sampling allocation.
\end{theorem}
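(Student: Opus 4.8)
The plan is to reduce the whole statement to an analysis of the single quantity $\Delta_n=\tfrac{1}{n}\mathbb{E}[\mathrm{Var}(q(s_i)\mid z_i)]$ appearing in \eqref{eq:MSE-decomp-keep}. Since the closed-loop reconstruction and deterministic restoration already guarantee $Q(\hat{s}_{1:n})=Q(s_{1:n})$ and unbiasedness has been established, the only possible source of excess error is the auxiliary sampling randomness inside the sparse map $\mathrm{S}$; thus both assertions of the theorem follow once $\Delta_n$ is evaluated in each regime.

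I would first handle the naturally 1-sparse case. Here $\mathrm{S}$ is deterministic (the identity $s_i=z_i$ or a fixed linear recoding), so conditional on $z_i$ the contribution $q(s_i)$ is a constant and $\mathrm{Var}(q(s_i)\mid z_i)=0$ pointwise. Hence $\Delta_n=0$, and \eqref{eq:MSE-decomp-keep} collapses to $\mathrm{MSE}(\widehat{\theta}_{\mathrm{PEEL}})=\mathrm{MSE}(\widehat{\theta}_{\mathrm{LDP}})$, giving the first claim at once.

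For the Horvitz--Thompson regime I would compute the conditional moments coordinate by coordinate, exploiting the dimension-wise structure of $Q$. Under \eqref{eq:ht-construction} a single index $J\sim p(z_i)$ is drawn and $s_{i,j}=\tfrac{t_j(z_i)}{p_j(z_i)}\mathbf{1}\{J=j\}$, so $\mathbb{E}[s_{i,j}\mid z_i]=t_j(z_i)$ and $\mathbb{E}[s_{i,j}^2\mid z_i]=t_j(z_i)^2/p_j(z_i)$; the per-coordinate conditional variance is therefore $w_j^2\,t_j(z_i)^2\big(1/p_j(z_i)-1\big)$. Summing over $j$—where the dimension-wise MSE is the sum of the marginal coordinate variances—and taking expectations yields the stated $\Delta_n$. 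I expect this to be the main obstacle: in a single-index draw the surviving coordinates are anti-correlated, so one must argue explicitly that a dimension-wise aggregation accumulates only the diagonal (marginal) variances, rather than the variance of the scalar $\sum_j w_j s_{i,j}$, which would carry additional cross-covariance terms absent from the claimed expression.

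Finally, for the optimal allocation I would isolate the only $p$-dependent part of $\Delta_n$, namely $\mathbb{E}[\sum_j w_j^2 t_j(z_i)^2/p_j(z_i)]$, and minimize $\sum_j w_j^2 t_j^2/p_j$ over the simplex $\sum_j p_j=1$ by Cauchy--Schwarz (equivalently a Lagrange-multiplier argument), which forces $p_j^\star\propto|w_j\,t_j|$ with minimal value $(\sum_j|w_j\,t_j|)^2$. Because this optimized constant is finite and independent of the sample size, $\Delta_n=O(1/n)$, leaving the baseline rate intact and contributing only a constant-order inflation whose magnitude inherits the $(\varepsilon,k)$-dependence carried by $t$ and $w$; this establishes the final claim.
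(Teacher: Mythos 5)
Your proposal follows essentially the same route as the paper's proof: the same reduction of both claims to evaluating $\Delta_n=\tfrac{1}{n}\mathbb{E}[\mathrm{Var}(q(s_i)\mid z_i)]$ from the variance/MSE decomposition, the same two-case analysis (deterministic $\mathrm{S}$ gives zero conditional variance; the Horvitz--Thompson second moment $\mathbb{E}[s_{i,j}^2\mid z_i]=t_j(z_i)^2/p_j(z_i)$ gives the stated formula), and the same Cauchy--Schwarz/Lagrange optimization yielding $p_j^\star\propto|w_j\,t_j(z_i)|$ with minimum $\big(\sum_j|w_j t_j(z_i)|\big)^2$ and hence an $O(1/n)$ additive term. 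The one place you go beyond the paper is the ``obstacle'' you flag: for a single-index draw the coordinates of $s_i$ are negatively correlated, so a genuinely scalar $q(s_i)=\sum_j w_j s_{i,j}$ would carry cross-covariance terms $-\sum_{j\neq j'}w_j w_{j'}t_j(z_i)t_{j'}(z_i)$ absent from the claimed $\Delta_n$; the paper's proof simply sums the marginal variances without comment, so your explicit restriction to dimension-wise aggregation (per-coordinate MSE, where only diagonal variances enter) is the correct and more careful way to justify that step.
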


\begin{proof}
If client reports are naturally 1-sparse with symmetric signs and known selection probabilities, then $s_i$ carries no additional randomness given $z_i$, hence:
\begin{equation}
    \mathrm{Var}\big(q(s_i)\mid z_i\big)=0
\Rightarrow
\Delta_n=0,
\end{equation}
and therefore:
\begin{equation}    \mathrm{MSE}\big(\widehat{\theta}_{\mathrm{PEEL}}\big)=\mathrm{MSE}\big(\widehat{\theta}_{\mathrm{LDP}}\big).
\end{equation}

Thus, for these mechanisms, PEEL preserves the baseline statistical accuracy (no degradation).

If client reports are multi-dimensional/dense, adopt the Horvitz--Thompson sparsification, i.e., \eqref{eq:ht-construction}. For each coordinate $j$ (conditional on $z_i$) have:
\begin{equation}
\!\mathbb{E}[s_{i,j}\!\mid\! z_i]\!=\!t_j(z_i),
\mathrm{Var}(s_{i,j}\!\mid\! z_i)\!=\!t_j(z_i)^2\left(\frac{1}{p_j(z_i)}\!-\!1\right)\!.\!\!\!
\label{eq:HT-cond-var-keep}
\end{equation}

Let $q(s_i)=\sum_j w_j s_{i,j}$ denote the sample-level contribution for linear/dimension-wise aggregation (e.g., $w_j=1$ for per-dimension means/frequencies). Then, substituting \eqref{eq:ht-construction} into \eqref{eq:HT-cond-var-keep} yields:
\begin{equation}
\mathrm{Var}\big(q(s_i)\mid z_i\big)
=\sum_{j} w_j^2\,t_j(z_i)^2\left(\frac{1}{p_j(z_i)}-1\right),
\end{equation}
and substituting into \eqref{eq:MSE-decomp-keep} gives:
\begin{equation}
\Delta_n
=\frac{1}{n}\,\mathbb{E}\left[\sum_{j} w_j^2\,t_j(z_i)^2\left(\frac{1}{p_j(z_i)}-1\right)\right].
\label{eq:Delta-weighted-keep}
\end{equation}

Under $\sum_j p_j(z_i)=1$, minimizing $\sum_{j} \frac{w_j^2\, t_j(z_i)^2}{p_j(z_i)}$ yields:
\begin{equation}
p_j^\star(z_i)\ \propto\ |w_j\,t_j(z_i)|.
\label{eq:pstar-weighted-keep}
\end{equation}

With \eqref{eq:pstar-weighted-keep}, yields:
\begin{equation}
    \sum_j \frac{w_j^2 t_j(z_i)^2}{p_j^\star(z_i)}
=\Big(\sum_j |w_j t_j(z_i)|\Big)^2.
\end{equation}

So the optimal additive term admits the bound:
\begin{equation}
\Delta_n^\star
=\frac{1}{n}\,\mathbb{E}\Big[\ \big\|W\,t(z_i)\big\|_1^2 - \big\|W\,t(z_i)\big\|_2^2\ \Big],
\label{eq:Delta-opt-keep}
\end{equation}
where $W=\mathrm{diag}(w_1,\ldots,w_k)$. In particular, for $w_j\equiv 1$,
\begin{equation}
\Delta_n^\star
=\frac{1}{n}\,\mathbb{E}\big[\|t(z_i)\|_1^2-\|t(z_i)\|_2^2\big]
\le\frac{k-1}{n}\,\mathbb{E}\big[\|t(z_i)\|_2^2\big],
\end{equation}
where the final inequality uses $\|u\|_1^2\le k\|u\|_2^2$. Consequently, $\Delta_n=O(1/n)$, preserving the $1/n$ error order and the same $(\varepsilon,k)$ dependence as the baseline; the additive constant can be optimized via \eqref{eq:pstar-weighted-keep}. When $t(z_i)$ is itself (approximately) 1-sparse, $\|t\|_1^2\approx\|t\|_2^2$ and the additive term is negligible.
\end{proof}

These results show that, for LDP mechanisms that naturally produce 1-sparse reports, statistical accuracy is preserved. For non–1-sparse mechanisms, PEEL contributes at most an optimizable constant-order additive term, while retaining the baseline $O(1/n)$ error rate and the same $(\varepsilon,k)$-dependence. Overall, PEEL enables structural reconstruction and consistency checking without degrading the statistical accuracy of LDP-based analyses.

\section{Robustness Analysis of PEEL}

This section establishes a theoretical framework for analyzing the robustness of PEEL, focusing on its ability to expose poisoning behaviors through structural consistency. PEEL achieves poisoning exposure by amplifying structural inconsistencies in a constrained geometric space. Under 1-sparse encoding and symmetric normalization, benign samples yield standardized representations confined to two discrete magnitudes. Any reconstruction that falls outside this support indicates structural deviation and reveals potential poisoning.

\subsection{Against Output Poisoning}

We consider the normalized structural reference matrix $\tilde{\mathcal{D}} \in \mathbb{R}^{k \times 2k}$, with columns drawn from the discrete set $\{\pm v_1, \pm v_2\}$ for some $v_1, v_2 \in \mathbb{R}^+$. Accordingly, each normalized sparse structural vector $\tilde{s}_i \in \mathbb{R}^k$ has at most one nonzero entry, satisfying $\tilde{s}_{i,j} \in \{\pm v_1, \pm v_2\}, \|\tilde{s}_i\|_0 \le 1$. Define index sets $\Omega_1 = \{ j \mid \tilde{s}_{i,j} = v_1 \}$,$\Omega_2 = \{ j \mid \tilde{s}_{i,j} = -v_2 \}$. Then $\tilde{s}_i$ can be represented as a sparse linear combination of canonical basis vectors:
\begin{equation}
    \tilde{s}_i = v_1 \sum_{j \in \Omega_1} w_j - v_2 \sum_{j \in \Omega_2} w_j,
\end{equation}
where $w_j \in \mathbb{R}^k$ denotes the $j$-th standard basis vector.

After projection, the structural expression becomes:
\begin{equation}
y_i = \Phi \tilde{s}_i = v_1 \sum_{j \in \Omega_1} \Phi w_j - v_2 \sum_{j \in \Omega_2} \Phi w_j.
\end{equation}

Under an output poisoning attack, adversaries inject a perturbation $\Delta$ into the projected vector:
\begin{equation}
y^{\Delta}_i = y_i + \Delta = v_1\sum_{j \in \Omega_1} \Phi w_j -v_2\sum_{j \in \Omega_2} \Phi w_j + \Delta,
\end{equation}

The poisoned reconstruction is then computed as:
\begin{equation}
\hat{s}^{\Delta}_i = \Gamma y^{\Delta}_i = W \Theta^\dagger y^{\Delta}_i = W \Theta^\top y^{\Delta}_i,
\end{equation}

Since $\Theta^\top$ is symmetric and positive semidefinite, it admits an eigen decomposition $\Theta^\top = U \Lambda U^\top$, with orthonormal eigenvectors $\{u_\ell\}_{\ell=1}^{k-1}$, yielding
\begin{equation}
\begin{aligned}
\hat{s}^{\Delta}_i &= W \Theta^\top \Bigg( v_1\sum_{j \in \Omega_1} \Phi w_j 
        - v_2\sum_{j \in \Omega_2} \Phi w_j + \Delta \Bigg) \label{eq:poisoned_s}\\
& = W \sum_{\ell=1}^{k-1} \Bigg( v_1 \sum_{j \in \Omega_1} u_\ell^\top \Phi w_j 
        - v_2\sum_{j \in \Omega_2} u_\ell^\top \Phi w_j + u_\ell^\top \Delta \Bigg) u_\ell 
\end{aligned}
\end{equation}
Here, the first two terms are fixed and deterministic, while the term $u_\ell^\top \Delta$ incurs unpredictable shifts caused by poisoning noise. Given that the dimension of $\mathrm{null}(\Theta^\top)$ is typically small, $u_\ell^\top \Delta \ne 0$ holds with high probability. This transformation converts even small poisoning noise into continuous deviations that force $\hat{s}_i^\Delta$ to leave its discrete domain, thereby amplifying the observable reconstruction error. Thus, the reconstructed vector deviates from the expected discrete domain, resulting in observable structural inconsistency.

\subsection{Against Rule Poisoning}

This section investigates the robustness of PEEL under rule poisoning attacks, where adversaries do not tamper with individual samples directly but instead manipulate system-level parameters such as the privacy budget $\varepsilon$ or the projection matrix $\Phi$. Unlike explicit data corruption, such attacks operate at the mechanism layer and exhibit higher stealthiness and transferability, making them difficult to detect using traditional sample-level defenses.

\subsubsection{Under Privacy-Budget Poisoning}

This subsection theoretically examines PEEL's robustness against privacy budget poisoning, wherein adversaries manipulate the privacy parameter $\varepsilon^\Delta \ne \varepsilon$ to induce systematic deviations in perturbation strength. A poisoned budget $\varepsilon^\Delta < \varepsilon$ intensifies noise, while $\varepsilon^\Delta > \varepsilon$ reduces it, both scenarios may destabilize the structural encoding and introduce poisons in the projected space.

Under a benign configuration, the perturbed output $z_i$ is mapped to a standardized structural encoding $\tilde{s}_i \in \mathrm{col}(W)$, remaining within the structural subspace. Under poisoning, the reconstructed structure is computed as:
\begin{equation}
    \hat{s}_i^\Delta = \Gamma \Phi \tilde{s}_i^\Delta.
\end{equation}

Based on the orthogonal decomposition in \eqref{orthogonal_decomposition}, any vector $\tilde{s}_i^\Delta$ can be expressed as a sum of its projection onto the structural subspace $\mathrm{col}(W)$ and an orthogonal residual. Since the reconstruction $\hat{s}_i^\Delta$ corresponds exactly to this projection, we have $\hat{s}_i^\Delta = WW^\top \tilde{s}_i^\Delta$. We thus define the structural consistency error as follows:
\begin{equation}
    \delta_i^\Delta := \|\hat{s}_i^\Delta - \tilde{s}_i^\Delta\|
\end{equation}
which quantifies the deviation of reconstruction from its expected discrete form. This error can be written as:
\begin{equation}
\delta_i^\Delta = \|W W^\top \tilde{s}_i^\Delta - \tilde{s}_i^\Delta\| 
= \|(I - \Pi_W) \tilde{s}_i^\Delta\|, 
\label{eq:residual_projection}
\end{equation}
where $\Pi_W = WW^\top$ denotes the orthogonal projection operator onto $\mathrm{col}(W)$, eliminating any component outside the structural subspace.

Consider the perturbed model in \eqref{Structured_coding_random_perturbation}, where the poisoning-induced perturbation $\mathcal{R}_i^{(\varepsilon^\Delta)}$ is assumed to be a sub-Gaussian random vector. If the structural encoding process is linear, then the resulting encoded vector $\tilde{s}_i^\Delta$ also follows a sub-Gaussian distribution. Applying the projection $(I - \Pi_W)\tilde{s}_i^\Delta$ yields a linear transformation of a sub-Gaussian vector, which remains sub-Gaussian. Consequently, the residual norm $\delta_i^\Delta$ becomes a sub-exponential random variable.

Let $u_i^\perp := (I - \Pi_W)\tilde{s}_i^\Delta$ denote the structural deviation vector, with covariance matrix $\Sigma^\Delta := \mathrm{Cov}(u_i^\perp)$ and spectral norm $\|\Sigma^\Delta\|$. By leveraging sub-exponential concentration inequalities~\cite{vershynin2018high, wainwright2019high}, the tail probability of $\delta_i^\Delta$ admits the following bound:
\begin{equation}
    \mathbb{P}(\delta_i^\Delta > \tau) \le 2 \exp\left( -c \cdot \min\left\{ \frac{\tau^2}{\|\Sigma^\Delta\|}, \frac{\tau}{\sqrt{\|\Sigma^\Delta\|}} \right\} \right),
    \label{eq:tail_bound}
\end{equation}
where $c$ is a universal constant.

Define a confidence bound $\tau_\varepsilon$ under benign noise:
\begin{equation}
    \mathbb{P}(\delta_i > \tau_\varepsilon \mid \varepsilon) \le \alpha, \quad
    \tau_\varepsilon := \sqrt{ \tfrac{\sigma_\varepsilon^2}{c} \log\!\left( \tfrac{2}{\alpha} \right) }.
\end{equation}

This bound establishes a reference region where deviations are statistically negligible with high confidence.

\textbf{Case 1: $\varepsilon^\Delta < \varepsilon$ (Excessive Noise).}  Here, the effective variance of $\delta_i^\Delta$ increases, so that
\begin{equation}
    \mathbb{P}(\delta_i^\Delta > \tau_\varepsilon \mid \varepsilon^\Delta) \gg \alpha,
\end{equation}
indicating frequent violations of the baseline confidence region.

\textbf{Case 2: $\varepsilon^\Delta > \varepsilon$ (Weakened Noise).}  Although nominal variance shrinks, poisoning incurs orthogonal deviations not modeled by $\varepsilon$. Consequently,
\begin{equation}
    \mathbb{P}(\delta_i^\Delta > \tau_\varepsilon \mid \varepsilon^\Delta) > \alpha,
\end{equation}
indicating structural shifts beyond the benign baseline.

These results suggest that even when raw data remain untouched, privacy-budget poisoning yields measurable structural deviation in the PEEL projection-reconstruction pipeline. PEEL can expose such deviations by evaluating the probability $\mathbb{P}(\delta_i^\Delta > \tau_\varepsilon)$ against the confidence threshold $\tau_\varepsilon$.

\subsubsection{Under Projection Matrix Poisoning}

Consider a scenario where the projection matrix is poisoned and is independent of the original projection matrix $\Phi$, i.e.,
\begin{equation}
    \Phi_{\mathrm{poisoned}} = \Phi + \Delta.
\end{equation}

Under poisoning, the received measurement is modified as:
\begin{equation}
    y_i^\Delta = \Phi_{\mathrm{poisoned}} \tilde{s}_i = (\Phi + \Delta) \tilde{s}_i.
\end{equation}

The receiver side reconstructs the structure using the inverse mapping $\Gamma$ (as defined in \eqref{inverse_transformation}) based on the unpoisoned matrix:
\begin{equation}
    \hat{s}_i^\Delta = \Gamma y_i^\Delta = \Gamma (\Phi + \Delta) \tilde{s}_i.
\end{equation}

Expanding the $i$-th component of the reconstructed vector:
\begin{equation}
    \hat{s}_i^\Delta = \sum_{p=1}^{k-1} \Gamma_{ip} y_p^\Delta 
    = \sum_{p=1}^{k-1} \sum_{j=1}^k \Gamma_{ip} (\phi_{pj}^{\mathrm{true}} + \delta_{pj}) \tilde{s}_j
    = E_i + P_i,
\end{equation}
where $E_i = \sum_{p,j} \Gamma_{ip} \phi_{pj}^{\mathrm{true}} \tilde{s}_j$ represents the nominal (benign) structural component, and $P_i = \sum_{p,j} \Gamma_{ip} \delta_{pj} \tilde{s}_j$ denotes the poisoning-induced deviation.

Since $\tilde{s}_i$ is a sparse and normalized structural vector whose entries are restricted to the discrete set $\{\pm v_1, \pm v_2\}$, the aggregator can reliably recover its structural state under benign conditions. If adversaries attempt to manipulate the output so that the reconstructed vector $\hat{s}_i^\Delta$ no longer corresponds to its true value $\tilde{s}_i$ but is instead forced to align with another discrete point $v_s^\Delta \in \{\pm v_1, \pm v_2\}$ (different from its original assignment), then the injected perturbation must satisfy:
\begin{equation}
    P_i = v_s^\Delta - E_i, \label{injected_perturbation_must_satisfy}
\end{equation}
where $P_i$ represents a crafted offset. 

Unlike random noise, it is not restricted to follow any prescribed distribution, and its construction depends entirely on the adversary’s strategy. However, because $P_i$ must exactly cancel and replace the discrete value of $E_i$, the feasibility of achieving \eqref{injected_perturbation_must_satisfy} corresponds to hitting a single point in a continuous space. Consequently, the probability of exact alignment is zero:
\begin{equation}
    \mathbb{P}(P_i = v_s^\Delta - E_i) = 0.
\end{equation}

Furthermore, the probability of simultaneously achieving precise control over all $k$ components, such that a new discrete pattern is formed (e.g., $1:(k{-}1)$ ratio), is given by:
\begin{equation}
    \mathbb{P}\left( \bigcap_{i=1}^k \{P_i = v_s^\Delta - E_i\} \right) = \prod_{i=1}^k \mathbb{P}(P_i = v_s^\Delta - E_i) = 0,
\end{equation}
which constitutes a Lebesgue-null set in the probability space and is thus almost surely unachievable.

In summary, when the aggregator reconstructs the structural representation along the legitimate decoding path, it becomes statistically infeasible for the adversary to deterministically steer the output $\hat{s}_i^\Delta$ through the injection of a poisoning matrix $\Delta$, as the reconstruction is constrained to a fixed discrete structural domain. Specifically, it is statistically infeasible to align the corrupted output with a new, consistent discrete pattern. As a result, the reconstructed vector $\hat{s}_i^\Delta$ deviates from the original discrete set $\{\pm v_1, \pm v_2\}$, leading to either multi-valued outputs or boundary drift. These deviations inherently violate the discrete structural constraints and thus serve as indicators of poisoning.

\section{Performance and Experimental Analysis}

This section evaluates PEEL from both performance and experimental perspectives. The performance analysis quantifies the client-side overhead in the encode-transmit pipeline under LDP constraints. The experimental analysis assesses end-to-end effectiveness on real IoT datasets—typical LDP deployment environments—using standard LDP mechanisms and attack models to evaluate poisoning exposure utility and robustness. Together, these analyses demonstrate that PEEL is both effective and lightweight in practice. 

\subsection{Performance Analysis}

LDP is predominantly deployed in decentralized and resource-constrained IoT settings, such as smart grids and vehicular networks \cite{YANG2024103827, shamshad_provably_2023, BATOOL2024119717}, where trusted aggregation is unavailable and per-client efficiency is essential. In these scenarios, privacy-preserving mechanisms must enforce rigorous protection while minimizing computational and communication costs. As a structural poisoning exposure framework tailored for LDP, PEEL must adhere to these constraints to ensure practical deployability. Accordingly, our analysis considers both computation and communication overhead on the client-side, reflecting the key efficiency requirements in LDP-based systems.

To support a comprehensive and fair evaluation, we consider two categories of representative privacy-preserving mechanisms. The first category comprises general-purpose privacy-preserving collaboration mechanisms, including Federated Learning  (FL) frameworks~\cite{badr_privacy-preserving_2023} and cryptography-based secure aggregation protocols~\cite{shamshad_provably_2023,parameswarath_privacy-preserving_2023}, which serve as strong baselines in distributed settings. The second category consists of poisoning-resilient pre-perturbation defenses under LDP constraints, including Secure OLH~\cite{10.1007/978-3-030-81242-3_3}, VGRR~\cite{10220122}, emPrivKV~\cite{app14146368}, and OT-HCMS~\cite{10.1007/978-3-031-68208-7_18}, which provide localized robustness through statistical filters, cryptographic commitments, or oblivious transfer (OT) protocols. These mechanisms capture both the state-of-the-art in secure aggregation and the current landscape of poisoning mitigation techniques in LDP settings.

Our experimental implementation of PEEL builds upon Harmony~\cite{nguyen2016collecting}, an LDP mechanism used by Samsung for smartphone telemetry collection. This choice is motivated by Harmony's inherent generation of 1-sparse outputs through random dimension selection, which naturally aligns with PEEL's structural condition (C1). Harmony serves as a representative instance of dimension-selection-based LDP mechanisms (including Duchi~\cite{duchi2018minimax}, PM~\cite{wang2019collecting}), demonstrating PEEL's compatibility with this class of methods. While our evaluation focuses on Harmony for clarity, PEEL's framework readily extends to other LDP mechanisms satisfying the structural conditions.

\subsubsection{Communication Overhead}

Communication overhead is a critical factor in evaluating the deployability of privacy-preserving mechanisms, especially in distributed environments where wireless transmission is dominant and data transfer costs can far exceed local computation. Transmitting a single bit consumes over 1000 times the energy required for a 32-bit arithmetic operation~\cite{barr2006energyaware}. This section compares the communication cost of Harmony-integrated PEEL (Harmony-PEEL) against representative mechanisms, as outlined in Table~\ref{tab:communication_overhead_detailed}.

\begin{table*}[hbtp]
    \centering
    \caption{Comparison of Client-Side Communication Overhead (Per Round)}
    \label{tab:communication_overhead_detailed}
    \begin{tabular}{ccc}
        \hline
        \textbf{Scheme} & \textbf{Transmitted Content (per client)} & \textbf{Total Overhead (bits)} \\ \hline
        Badr et al.~\cite{badr_privacy-preserving_2023} 
        & $n$ float32 parameters ($\approx$ 1 KB/parameter) after CAT filtering 
        & $\geq 466,944$\\

        Shamshad et al.~\cite{shamshad_provably_2023} 
        & ECC public key (608) + ECC ciphertext (1280) + AES payload (128) 
        & 2016 \\ 

        Parameswarath et al.~\cite{parameswarath_privacy-preserving_2023} 
        & RSA-auth token (1760) + signature (2272)& $\geq$ 4032 \\ 

        emPrivKV~\cite{app14146368} 
        & 5 rounds of 1-out-of-$d$ OT, each sending a 2048-bit ciphertext 
        & $5 \cdot \lceil \log_2 d \rceil \cdot 2048$ \\ 

        VGRR~\cite{10220122} 
        & $\ell$ Pedersen commitments (2048-bit each) + openings for $1 + d\ell_2$ slots 
        & $2 \cdot \ell \cdot 2048$ (worst-case) \\ 

        Secure OLH~\cite{10.1007/978-3-030-81242-3_3} 
        & $n$ commitments + $g$ encoded slots, each 2048-bit 
        & $(2n + g) \cdot 2048$ \\ 

        OT-HCMS~\cite{10.1007/978-3-031-68208-7_18} 
        & 4 OT ciphertexts (2048-bit) + 1 hashed index (32-bit) + 1-bit response 
        & 8209 \\ 

        Harmony-PEEL 
        & $(k{-}1)$ projected dimensions, each encoded with $\lceil \log_2(k{-}1) \rceil$ bits 
        & 2016 (For $k = 252$)\\ \hline
    \end{tabular}
\begin{minipage}{\textwidth}
    \footnotesize\raggedright
    \textbf{Note:} All values represent per-round communication cost. Here, $n$ is the model dimensionality, $d$ is the domain size, $g = \lceil d/2 \rceil$ is the hash output dimension in Secure OLH, and $k$ is the number of encoding bins in Harmony-PEEL. Communication parameters are standardized: ECC keys are 224-bit, RSA and OT messages are 2048-bit, and all hash outputs are 256-bit.
\end{minipage}
\end{table*}

For representative privacy-preserving mechanisms, we examine three baselines. Badr et al.~\cite{badr_privacy-preserving_2023} proposed an FL scheme using categorical adaptive thresholding (CAT) to filter low-impact updates. Assuming $n$ model parameters and 1 KB per parameter, the total communication per round exceeds 57 KB. Shamshad et al.~\cite{shamshad_provably_2023} designed a three-party protocol integrating ECC and AES, transmitting 2016 bits per session per client. Parameswarath et al.~\cite{parameswarath_privacy-preserving_2023} further introduced a zero-knowledge proof (ZKP)-enhanced authentication protocol involving RSA tokens and signatures, yielding at least 4032 bits per session (excluding ZKP expansion).

For poisoning-resilient pre-perturbation defenses, we analyze four representative mechanisms. emPrivKV~\cite{app14146368} protects access patterns through five rounds of 1-out-of-$d$ OT, where each round transmits a 2048-bit ciphertext corresponding to a securely retrieved key. This design avoids explicit perturbation while maintaining privacy and estimation utility. VGRR~\cite{10220122} employs $\ell$ Pedersen commitments to bind the structure of local outputs, followed by opening up to $1 + d\ell_2$ values for server-side integrity verification. This scheme enforces structural accountability with minimal client-side cost. Secure OLH~\cite{10.1007/978-3-030-81242-3_3} enhances the OLH mechanism by encrypting both the encoded slots and perturbation masks, yielding $(2n + g)$ ciphertexts of 2048 bits each; it further supports verifiability via zero-knowledge proofs. OT-HCMS~\cite{10.1007/978-3-031-68208-7_18} combines Hadamard sketching with OT-based noise injection, where each client transmits $2^\tau$ OT ciphertexts (2048 bits each), along with a hashed index (32 bits) and a binary value, totaling 8209 bits under $\varepsilon=1$.

In our Harmony-based instantiation of PEEL, Harmony encodes each input as a 1-sparse vector with a deterministically chosen non-zero dimension.  PEEL exploits this structure to apply projection-based encoding, yielding a projected vector $\hat{y} \in \mathbb{R}^{k{-}1}$.  Each dimension is discretized into $\lceil \log_2(k{-}1) \rceil$ bits, leading to a total communication cost of $(k{-}1) \cdot \lceil \log_2(k{-}1) \rceil$ bits.  For $k = 252$, the results in 2016 bits, which is consistent with the overhead reported in~\cite{shamshad_provably_2023}; smaller $k$ further reduces the communication overhead.

Harmony-PEEL achieves superior communication efficiency compared to the FL frameworks and cryptographic-heavy authentication protocols~\cite{badr_privacy-preserving_2023, shamshad_provably_2023, parameswarath_privacy-preserving_2023}, which incur kilobyte-level or multi-round costs. It also outperforms state-of-the-art LDP poisoning defenses~\cite{app14146368, 10220122, 10.1007/978-3-030-81242-3_3, 10.1007/978-3-031-68208-7_18} that involve ciphertext transmission, commitment proofs, or ZKP validation. By ensuring bit-level compactness while preserving structural integrity and $\varepsilon$-LDP guarantees, Harmony-PEEL offers high deployability in bandwidth-sensitive, wireless, and resource-constrained environments.

\subsubsection{Computation Overhead}

As LDP mechanisms operate entirely on the client side, computation efficiency is a critical factor for practical deployment in resource-constrained environments.  We evaluate the per-client computation overhead of PEEL in terms of runtime latency per round, which directly reflects the feasibility of integration into large-scale data collection systems. The result is shown in Table~\ref{tab:computation_overhead_quantified}.

\begin{table*}[hbtp]
    \centering
    \caption{Comparison of Client-Side Computation Overhead (Per Round)}
    \label{tab:computation_overhead_quantified}
    \begin{tabular}{ccc}
        \hline
        \textbf{Scheme} & \textbf{Client Operations} & \textbf{Estimated Time/Client (ms)}\\ \hline
        Badr et al.~\cite{badr_privacy-preserving_2023} & $O(n)$ gradient filtering & $\approx$ 1\\ 
        Shamshad et al.~\cite{shamshad_provably_2023} & 2 ECC + 2 Hash + 1 AES & $\approx$2\\ 
        Parameswarath et al.~\cite{parameswarath_privacy-preserving_2023} & 2 RSA + 3 Hash + 1 ZKP Gen + 1 SigVerify & $\approx$121\\ 
        emPrivKV~\cite{app14146368} & $5 \cdot \lceil \log_2 d \rceil$ OT encryptions & $\approx$ $100 \cdot d$\\ 
        VGRR~\cite{10220122} & $\ell$ Pedersen Commit + $1{+}d\ell_2$ open ops & $\approx 20 \cdot \ell$\\ 
        Secure OLH~\cite{10.1007/978-3-030-81242-3_3} & $n$ commitments + $g$ proofs (Pedersen + ZKP + Hash) & $\approx 10 \cdot n \cdot g$\\ 
        OT-HCMS~\cite{10.1007/978-3-031-68208-7_18} & $2^\tau$ OT (each with 1 enc + 2 dec) + 1 Hadamard & $\approx 30 \cdot 2^\tau$\\ 
        Harmony-PEEL & 1 Hash + 1 Projection & $\approx 0.01$\\ \hline
    \end{tabular}
    \begin{minipage}{\textwidth}
    \footnotesize\raggedright
    \textbf{Note:} AES and hash operations are about $1\,\mu$s per call, based on the \texttt{openssl speed} benchmark with AES-NI support~\cite{openssl-bench,aesni-benchmark}. RSA (2048-bit) and ECC (256-bit) are roughly $50$ ms and $1$ ms, respectively, according to OpenSSL on commodity Intel CPUs~\cite{openssl-bench}.  OT typically costs $20$–$50$ ms in implementations such as libOTe~\cite{libote-eprint}. Pedersen commitments and ZKP generation take about $1$ ms and $20$ ms, respectively, consistent with elliptic-curve and SNARK frameworks~\cite{bulletproofs,groth16}. Sparse projection adds $\approx 9\,\mu$s, as observed in BLAS/OpenBLAS benchmarks~\cite{blas-bench}. Overheads from memory allocation are excluded since LDP perturbations are applied immediately before release.
\end{minipage}
\end{table*}
To contextualize PEEL’s efficiency, we compare client-side computation overhead among three representative LDP baselines. In~\cite{badr_privacy-preserving_2023}, each client filters an $n$-dimensional gradient vector, requiring $O(n)$ comparisons and thresholding, but no cryptographic operations. Shamshad et al.~\cite{shamshad_provably_2023} combine lightweight primitives—two elliptic curve operations, two hash computations, and one AES encryption—yielding microsecond-level runtime. In contrast,~\cite{parameswarath_privacy-preserving_2023} involves high-cost primitives: RSA encryption, ZKP generation, and signature verification. These operations are unsuitable for frequent execution on constrained clients.

For poisoning-resilient pre-perturbation defenses, we evaluate four mechanisms. emPrivKV~\cite{app14146368} requires each client to perform $5 \cdot \lceil \log_2 d \rceil$ rounds of OT encryption, leading to a linear-time cost in $d$. VGRR~\cite{10220122} generates $\ell$ Pedersen commitments and opens up to $1 + d\ell_2$ slots, incurring moderate computation tied to cryptographic group operations. Secure OLH~\cite{10.1007/978-3-030-81242-3_3} augments OLH with commitment proofs and ZKPs across $n$ input dimensions and $g$ output components. Its cost scales with $O(n \cdot g)$ and becomes non-negligible for large $d$. OT-HCMS~\cite{10.1007/978-3-031-68208-7_18} applies Hadamard sketching and $2^\tau$ rounds of OT; with $\varepsilon = 1 \Rightarrow \tau = 2$, each round includes encryption and two decryptions. Although resilient to poisoning attacks, these protocols are computation-heavy.

Harmony-PEEL executes only one hash and one projection operation per round. The projection maps a $k$-dimensional one-hot vector into a $(k{-}1)$-dimensional space for exposure analysis, with total runtime below 10 $\mu$s. No asymmetric encryption, ZKP, or iterative interaction is involved. This low overhead allows Harmony-PEEL to remain scalable and responsive in bandwidth- and energy-constrained LDP settings.

Harmony-PEEL achieves the lowest client-side computation cost among all surveyed mechanisms. Its microsecond-level runtime outperforms cryptographic and hybrid protocols by 1–2 orders of magnitude, while preserving both privacy and poisoning exposure fidelity. This efficiency makes it well-suited for practical deployment in decentralized and resource-constrained data collection settings. 
\vspace{-1ex}

\subsection{Experimental Analysis}

\noindent\textbf{Environment.} Experiments were run in Python~3.9 on Windows~11 with an Intel Core i7-13700 (2.10\,GHz) and 16\,GB RAM. Source code is available at the project repository~\cite{shuai2025poisoncatcher}.

\noindent\textbf{Datasets.} We evaluate on two IoT datasets—the World Weather Repository (WWR)~\cite{global_weather_repository_kaggle} and the Smart Building Indoor Environmental dataset (SBD)~\cite{erol2023smartbuilding}. Preprocessing removes records with uneven spatial coverage or irregular sampling intervals; remaining numeric features are min–max scaled to [-1,1]. Variables unsuitable for mean/frequency analyses (e.g., overly dispersed or highly skewed) are filtered out.

\noindent\textbf{Parameters.} The LDP privacy budget is set to $\varepsilon=1$.  For the rule–poisoning attack, per-node budgets are sampled within the bounds specified by~\eqref{rule_poisoning_attacks}.  For the output–poisoning attack, outputs are randomized via a post-processing kernel constrained by~\eqref{output_poisoning_attacks}.

Most research on LDP poisoning has focused on defenses, with limited attention to identifying poisoned records. To date, three approaches estimate dataset-level poisoning ratios: DETECT~\cite{272214}, LDPGuard~\cite{10415225}, and a combined human and AI expert assessment (baseline). PoisonCatcher~\cite{shuai2025poisoncatcher} advances this line by statistically identifying poisoned records at the record level. Building on structural-consistency verification, PEEL enables precise record-level identification. To place all five methods on a common footing, this subsection evaluates their accuracy on the poisoning-ratio estimation task. For cross-query comparability, KRR is used as the LDP mechanism for frequency (categorical) queries, and the Laplace mechanism for mean (numeric) queries. Results are reported in Table~\ref{tab:attack_success}.

\begin{table*}[ht]
\centering
\caption{Comparison of Attack Ratio Estimates on WWR and SBD}
\label{tab:attack_success}
\renewcommand{\arraystretch}{1.0}
\setlength{\tabcolsep}{4.5pt} 

\begin{tabular}{@{}ccc@{\hskip 3pt}*{6}{cc}@{}}
\toprule
\textbf{Protocols} & \textbf{Attack Mode} & \makecell{\textbf{True} \\ \textbf{Attack Ratio}} 

& \multicolumn{2}{c}{\makecell{\textbf{DETECT} \\ \cite{272214} \textbf{Estimate}}} 
& \multicolumn{2}{c}{\makecell{\textbf{Expert} \\ \cite{shuai2025poisoncatcher} \textbf{Estimate}}} 
& \multicolumn{2}{c}{\makecell{\textbf{LDPGuard} \\ \cite{10415225} \textbf{Estimate}}} 
& \multicolumn{2}{c}{\makecell{\textbf{PoisonCatcher} \\ \cite{shuai2025poisoncatcher} \textbf{Identification}}}
& \multicolumn{2}{c}{\makecell{\textbf{PEEL} \\ \textbf{Identification}}} \\
\cmidrule(lr){4-5}
\cmidrule(lr){6-7}
\cmidrule(lr){8-9}
\cmidrule(lr){10-11}
\cmidrule(lr){12-13}
& & & \textbf{WWR} & \textbf{SBD}
      & \textbf{WWR} & \textbf{SBD}
      & \textbf{WWR} & \textbf{SBD}
      & \textbf{WWR} & \textbf{SBD}
      & \textbf{WWR} & \textbf{SBD} \\
\midrule

\multirow{2}{*}{Laplace} 
& Rule Poisoning Attack & 5\% & — & — & 8.66\% & 8.54\% & — & — & {5\%} & {4.94\%} & \textbf{5\%} & \textbf{5\%}\\
& Output Poisoning Attack & 5\% & — & — & 9.35\% & 8.83\% & — & — & {5\%} & {4.89\%} & \textbf{5\%} & \textbf{5\%}\\
\midrule

\multirow{2}{*}{KRR} 
& Rule Poisoning Attack & 5\% & — & — & 6.72\% & 6.31\% & — & — & {4.99\%} & {4.93\%} & \textbf{5\%} & \textbf{5\%}\\
& Output Poisoning Attack & 5\% & — & — & 8.62\% & 8.01\% & 38.2\% & 35.67\% & {5\%} & {4.96\%} & \textbf{5\%} & \textbf{5\%}\\
\bottomrule
\end{tabular}
\end{table*}

Across both datasets (WWR, SBD), both LDP mechanisms (Laplace for mean, KRR for frequency), and both attack modes (rule poisoning, output poisoning), PEEL’s attack-ratio estimate matches the ground truth (5\%) in every case. PoisonCatcher is the next best, staying within ±0.11 pp of the truth. In contrast, DETECT and the human+AI expert baseline systematically overestimate (e.g., 6.31–9.35\%), and LDPGuard is unstable—especially under KRR with output poisoning, where its estimates deviate drastically (35.67–38.2\%). These results show that PEEL’s structural consistency verification yields accurate and dataset/mechanism-agnostic attack-ratio estimates while retaining record-level localization capability.

\section{Conclusions}

PEEL leverages the intrinsic structural consistency of LDP encodings for poisoning exposure, operating as a post-processing module that requires no modification to existing LDP mechanisms. Theoretically, it preserves the unbiasedness and statistical accuracy of the underlying mechanism while exposing both output- and rule-level poisoning. Empirically, PEEL reduces client-side overhead compared to multiple privacy-preserving baselines and outperforms state-of-the-art defenses in poisoning-detection accuracy, demonstrating its practicality for large-scale IoT deployment.


%



\ifCLASSOPTIONcaptionsoff
  \newpage
\fi



\bibliographystyle{IEEEtran}
\bibliography{PEEL}
%

%




\end{document}